\newcommand{\liff}{\leftrightarrow}
\DeclareMathOperator{\MULT}{MULT}
\begin{document}

\title{SAT and Lattice Reduction for Integer Factorization}

\author{Yameen Ajani}
\email{ajaniy@uwindsor.ca}
\affiliation{%
  \institution{University of Windsor}
  \streetaddress{401 Sunset Ave}
  \city{Windsor}
  \state{Ontario}
  \country{Canada}
}

\author{Curtis Bright}
\email{cbright@uwindsor.ca}
\orcid{0000-0002-0462-625X}
\affiliation{%
  \institution{University of Windsor}
  \streetaddress{401 Sunset Ave}
  \city{Windsor}
  \state{Ontario}
  \country{Canada}
}

\renewcommand{\shortauthors}{Y.~Ajani and C.~Bright}

\begin{abstract}
The difficulty of factoring large integers into primes is the basis for cryptosystems such as RSA\@.
Due to the widespread popularity of RSA, there have been many proposed attacks on the factorization problem such as side-channel attacks where some bits of the prime factors are available.
When enough bits of the prime factors are known, two methods that are effective at solving the factorization problem are satisfiability (SAT) solvers and Coppersmith's method.
The SAT approach reduces the factorization problem to a Boolean satisfiability problem, while Coppersmith's approach uses lattice basis reduction.
Both methods have their advantages, but they also have their limitations: Coppersmith's method does not apply when the known bit positions are randomized, while SAT-based methods can take advantage of known bits in arbitrary locations, but have no knowledge of the algebraic structure exploited by Coppersmith's method.
In this paper we describe a new hybrid SAT and computer algebra approach to efficiently solve random leaked-bit factorization problems.
Specifically, Coppersmith's method is invoked by a SAT solver to determine whether a partial bit assignment can be extended to a complete assignment.
Our hybrid implementation solves random leaked-bit factorization problems significantly faster than either a pure SAT or pure computer algebra approach.
\end{abstract}

\begin{CCSXML}
<ccs2012>
<concept>
<concept_id>10002978.10002979.10002983</concept_id>
<concept_desc>Security and privacy~Cryptanalysis and other attacks</concept_desc>
<concept_significance>500</concept_significance>
</concept>
</ccs2012>
\end{CCSXML}

\ccsdesc[500]{Security and privacy~Cryptanalysis and other attacks}

\keywords{Factoring,
SAT,
Lattice Basis Reduction,
Cryptography,
RSA,
Coppersmith's Method}

\maketitle

\section{Introduction}
Integer factorization is a well-studied and important problem in the mathematics and computer science community, both because
of its theoretical elegance but also because its difficulty forms the theoretical basis of popular cryptosystems such as RSA.
Given an integer $N$, the factorization problem is to decompose $N$ as a product $N=p_1 \dotsb p_k$ where the~$p_i$
are prime numbers.  Up to ordering of the prime factors~$p_i$ (some of which may appear multiple times)
the factorization is unique---a fact that was essentially shown by Euclid around 300 BC,
though not stated in full completeness until 1801 by Gauss~\cite{Collison1980}.

It is unknown if there exists an algorithm that can factor integers in polynomial time
in the bitlength of $N$, at least on a classical computer.  The fastest general algorithm
discovered to date is the number field sieve~\cite{1993} which heuristically runs in sub-exponential time.
In addition, Shor's algorithm~\cite{Shor1999} is a quantum-based method that can factor composites in polynomial time
subject to the availability of a fault-tolerant quantum computer.
The difficulty of factoring integers---especially semiprimes (numbers with exactly two prime factors)---forms
the basis many cryptosystems currently in wide usage such as RSA.

The most successful methods proposed to solve the factorization problem exploit the algebraic structure inherent in the problem.
A separate approach reduces factoring~$N$ to a Boolean satisfiability (SAT) problem
that when solved reveals a nontrivial factor of~$N$.
In recent years, SAT solvers have achieved great success on many varied kinds of search
problems---there are numerous practical and theoretical problems for which
SAT solvers are the most effective known way of solving the problem~\cite{handbook2021}.
Some difficult mathematical problems---such as the resolution of the Boolean Pythagorean triples
problem~\cite{Heule2016} or the computation of the fifth Schur number~\cite{Heule2018}---have
\emph{only} been solved using SAT solvers.
Unfortunately, for the factoring problem specifically, the SAT approach is dramatically outperformed by
algebraic algorithms~\cite{Mosca2022}. This is not surprising, since although SAT solvers
are great general-purpose search tools, they struggle with problems having a mathematical structure
unknown to the solver~\cite{Bright2022}.

Recently, there have been SAT solvers 
augmented with a programmatic
interface supporting the injection of logical facts as the
solver is running~\cite{Ganesh2012,lipics.sat.2023.8,bright2016mathcheck2}.
Such an approach has successfully resolved mathematical
problems that were beyond the reach of SAT solvers or algebraic methods
alone~\cite{Bright2019}.  For example, progress has been made on certain
mathematical conjectures by extracting mathematical facts from
a computer algebra system (CAS) and programmatically passing them
to a SAT solver as the solver is running~\cite{Zulkoski2016}.
Augmenting a SAT solver in this way can dramatically improve
its effectiveness---intuitively, it is no longer restricted
to reasoning on the level of Boolean logic.
On the other hand, such a solver can also outperform
pure algebraic methods, especially on problems that benefit
from efficient search routines.  Intuitively, this is because
traditionally CASs have not exploited the effective
search-with-learning algorithms developed for SAT solvers~\cite{Abraham2015}.

In the past decade, the line between SAT solving
and computer algebra is starting to blur
with the development of numerous hybrid methods exploiting
SAT solvers in conjunction with computer algebra~\cite{Davenport2020}.
For example, the ``SC-square'' project combines
SAT and computer algebra and has been applied
to fields as diverse as economics, dynamic geometry,
and knot theory~\cite{England2022a}.

\subsection{Our contributions}

In this paper, we introduce a new programmatic SAT method that dramatically
improves the performance of SAT solvers on integer factorization problems
by exploiting algebraic structure of the problem that would otherwise
be hidden from the solver.  More precisely, we employ Coppersmith's
method~\cite{don-cs} for finding small roots of polynomials modulo
a number~$N$ using lattice basis reduction.

Coppersmith's method can factorize
a semiprime~$N$ in polynomial time when either the top half or the
bottom half of the bits of one of its prime factors is known~\cite{DeMicheli2024}.
We exploit the algebraic structure revealed by Coppersmith's method
in the programmatic SAT solver MapleSAT~\cite{DBLP:conf/sat/LiangGPC16}
by querying a computer algebra system
supporting the necessary lattice basis reduction routines.
The information provided by Coppersmith's method is translated into
logical facts that the solver uses to backtrack much earlier
than it otherwise would, dramatically improving the performance
of the solver.

It should be stressed that our approach is not directly competitive with
the best algebraic methods for the integer factorization problem.  However, due
to the practical importance of the factoring problem it has long been
of interest to study weakenings of the factorization problem where
some information about the prime factors are assumed to be known in
advance.  In practice, such information may be leaked through
side-channel attacks (see Section~\ref{sec:sidechannel}).  In our work, we consider random leaked-bit
factorization problems---i.e., where random bits of the prime factors
of the number to factor are known, but the attacker
\emph{has no control over which bits are leaked}.

Although Coppersmith's method requires only half of the bits of
the prime factors to be known (see Section~\ref{sec:fact-coppersmith}),
the method requires the known bits to be \emph{consecutive}---ideally
either the high bits or low bits
of one of the prime factors.  Coppersmith's method can be adapted to work
with multiple chunks of unknown bits, but it is
exponential in the number of chunks~\cite{Herrmann2008}.  Thus, in general
Coppersmith's method does not directly apply when the known bit positions are distributed
uniformly at random.

Conversely, our method takes advantage of known bits
from arbitrary positions but also takes advantage of
the algebraic relationships revealed by Coppersmith's method.
Our results, discussed in Section~\ref{sec:results},
show that our augmented SAT solver can solve some
leaked-bit factorization problems exponentially faster than
an off-the-shelf SAT solver.  It also
outperforms a brute-force approach of trial division
by all factors consistent with the known bits, even if
Coppersmith's method is used to speed up the brute-force
guessing, and can outperform the ``branch and prune''~\cite{HS09}
approach (see Section~\ref{sec:comparison}).  With enough known bits our method
even outperforms the fastest general-purpose factoring
algorithms such as the number field sieve, though we admit this is not really
a fair comparison since the number field sieve seems unable to
exploit known bits and hence is at a disadvantage for the leaked-bit factorization problem
we consider in this paper.
In summary, our method outperforms algebraic methods, pure SAT methods, and
a brute-force + Coppersmith method.

\section{Preliminaries}\label{sec:preliminaries}

In this section we outline the preliminaries needed to understand our approach
for solving random leaked-bit factorization problems.

\subsection{RSA cryptosystem}

RSA is a public-key cryptography system used for signing and encrypting messages 
invented by Rivest, Shamir and Adleman~\cite{rivest1978method}.
As a public-key cryptosystem, RSA uses a public key and private key for message encryption and decryption, respectively.
An RSA user creates a set of two keys, public and private, which are specific to that particular user.
The public key is publicly available and can be used by anyone who wants to send an encrypted message to the user.
The private key is secret (available only to the recipient) and is used to decrypt the encrypted messages received by the user.
RSA public keys typically contain a number~$N$ that is the product of two large primes
and the security of the RSA scheme relies on the practical difficulty of factoring~$N$.

Some common terms used with respect to RSA are
the primes $p$ and $q$, the RSA modulus $N=p\cdot q$, the public exponent $e$, and the private exponent $d$.
The public exponent $e$ is often a fixed size; commonly $e=3$ or $e=65\mathord{,}537$.  The exponent~$e$ must be chosen
to share no common factors with both $p-1$ and $q-1$.  The key parameters are chosen so that the functions
$x\mapsto x^e\bmod N$ and $x\mapsto x^d\bmod N$ are inverses of each other.  If an attacker can factor~$N$,
then they can easily compute the private exponent $d=e^{-1}\bmod(p-1)(q-1)$ via the extended Euclidean algorithm,
and in fact computing~$d$ from $(e,N)$ is polynomial time equivalent to factoring $N$~\cite{May2004}.

\subsection{Side channel attacks}\label{sec:sidechannel}

Side-channel attacks aim to exploit information unintentionally leaked by a computer system or a device.
For example, cold boot attacks are a type of side-channel attack exploiting information remaining in the dynamic random-access memory (DRAM) of a computer system after an attacker cuts the power.
\citeauthor{Halderman:10.1145/1506409.1506429} demonstrate that this remanence effect makes it possible to
recover the contents of a computer's memory with high accuracy
after power has been removed, especially when the DRAM
is subjected to low temperature~\cite{Halderman:10.1145/1506409.1506429}.
Additionally, it was found that bits in DRAM modules tend to decay to a predictable ground state.
For example, the bits holding a private key may be known to decay to~0, not 1.
In this case, after performing a cold-boot attack, any bits that are still 1 are known to have
originally been 1, while bits that are 0 are unknown.
In this way, the known bits of the private key learned by the attacker may be randomly
distributed throughout the key.
Incredibly, experiments show that
disconnecting DRAM and storing it in liquid nitrogen for an hour
resulted in a decay of only 0.13\% of the bits~\cite{Halderman:10.1145/1506409.1506429}.

There are many other avenues from which bits of the private keys may be leaked,
including cache timing attacks on modular exponentiation, and security vulnerabilities
like Heartbleed, Spectre, and Meltdown~\cite{DeMicheli2024}.
Such vulnerabilities typically enable reading arbitrary memory contents (rather than leaking random bits),
though in some cases the attacks may leak only incomplete information.

\subsection{Boolean satisfiability}

The Boolean satisfiability (SAT) problem is to determine whether a formula in Boolean logic has an
assignment to its variables under which the statement becomes true.
If such an assignment exists, the problem is said to be \emph{satisfiable}.
Although SAT is an NP-complete problem and thought to be impractical to solve in the worst case,
in practice there are ``SAT solvers'' that can find satisfying assignments---or
prove the nonexistence of satisfying assignments---for many statements of practical interest.

Most SAT solvers require the input statement to be written in conjunctive normal form~(CNF), i.e.,
a conjunction of \emph{clauses}---clauses being formulae of the form $l_1\lor\dotsb\lor l_k$ where each $l_i$
is a Boolean variable or negated Boolean variable.  One of the most effective solving paradigms
is conflict-driven clause learning~\cite{MarquesSilva}
in which the solver learns new clauses as it searches for a satisfying assignment.

\subsection{SAT \& computer algebra}

Combining SAT with computer algebra systems (CASs) was proposed in 2015 by~\cite{Abraham2015} and~\cite{Zulkoski2015}.
Soon afterwards, the \emph{SC-Square project}~\cite{brahm2017}
started with the aim of facilitating connections between the communities
of satisfiability checking and symbolic computation.
Until that point, the two communities were largely separated, with ``satisfiability checking''
largely focused on search algorithms
and ``symbolic computation'' largely focused on mathematical algorithms.

Many successful applications have arisen as a result of connecting the two fields~\cite{Bright2022,England2022a}.
For example, proving the correctness of multiplier circuits~\cite{Kaufmann2023},
finding new algorithms for matrix multiplication~\cite{Heule2021},
making progress on conjectures in geometric group theory~\cite{Savela},
debugging of digital circuits~\cite{Mahzoon2018},
generating combinatorial objects up to isomorphism~\cite{lipics.sat.2023.14,DBLP:conf/scsquare/LiBG22},
and searching for collisions in hash functions such as
step-reduced SHA-256~\cite{alamgir2024}.

\subsection{Lattices and the LLL algorithm}

A lattice is a discrete and periodic set of points in Euclidean space.
It can be visualized as an infinite grid-like structure where each point is an integer linear combination of a set of linearly independent basis vectors (see Figure~\ref{fig:lattice}).
Lattices have applications in many fields of mathematics and computer science---number theory and cryptography in particular.
In particular, lattices are an essential component of Coppersmith's method that we rely on in our hybrid SAT and computer algebra
factorization approach.

\begin{figure}
\centering
\includegraphics{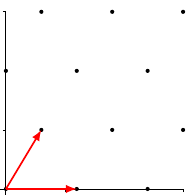}%
\caption{A two dimensional lattice generated by two vectors.}%
\label{fig:lattice}%
\end{figure}

The LLL algorithm is a lattice basis reduction technique used to find short and nearly orthogonal basis vectors for a lattice~\cite{Lenstra1982}.
Given a lattice defined by an $n\times n$ basis matrix, the LLL algorithm runs in polynomial time in~$n$ and finds another
basis of the given lattice where the first vector in the basis has length at most $2^{(n-1)/2}$ times that of the shortest nonzero vector of the lattice.

\subsection{Coppersmith's method}\label{sec:coppersmith}

Coppersmith's method finds small integer roots of a polynomial modulo a given integer~$N$~\cite{don-cs}.
Coppersmith's algorithm runs in polynomial time---even when the factorization of $N$ is unknown---and
it is this property that makes it a useful subroutine in certain relaxations of the factorization problem.

The method exploits a connection between short vectors and polynomials with small coefficients. 
Given a modulus~$N$ and a polynomial $f$ with a small root $x_0$ modulo $N$, Coppersmith's algorithm constructs a lattice 
for which every vector in the lattice corresponds to a polynomial with $x_0$ as a root modulo~$N$.

If a lattice vector is short enough, it will correspond to a polynomial $g$ for which $\lvert g(x_0)\rvert<N$.
Because $g(x_0)\equiv0\pmod N$ by construction, this implies $g(x_0)=0$ and thus $x_0$ is a root of $g$ \emph{over the integers}---not just mod~$N$.
Since the integer roots of a polynomial can be computed in polynomial time~\cite{vonzurGathen2013}, this reduces
the problem of finding the root $x_0\pmod{N}$ to the problem of finding a short vector in Coppersmith's lattice.

\subsection{Factoring with Coppersmith's method}\label{sec:fact-coppersmith}

We summarize Coppersmith's method as used in the factorization context.
For more details, see~\cite{Bos_Stam_2021,HowgraveGraham1997FindingSR}.

We assume that $N=p\cdot q$ is a semiprime;
for example, take $N=16803551=2837\cdot5923$.
Coppersmith's method can factorize~$N$ when either the top
or bottom half of the bits of~$p$ are known,
i.e., at least 50\% of $p$'s bits are known and
these are either $p$'s most significant bits (MSBs)
or least significant bits (LSBs).
%
In the former case, $p$ can be written as $p = \hat{p} + \check{p}$
where $\hat{p}$ is an integer encoding the known high bits as $p$,
and $\check{p}$ is an integer encoding the unknown low bits of~$p$.
As an example (using decimal digits instead of binary digits for simplicity), if $p = 2837$ and $\hat{p} = 2830$, then $\check{p} = 7$.

As described in Section~\ref{sec:coppersmith}, Coppersmith's method
finds small roots $x_0$ of a polynomial $f(x)$ modulo some integer.  
In the factoring application, the modulus used is~$p$. 
Note that $p$ is unknown, but we do know~$N$ (a multiple of~$p$)
which is enough---in this case Coppersmith's method returns the small integer $x_0$ for which
$f(x_0)\equiv 0 \pmod p$, and therefore
$f(x_0)\bmod N$ is divisible by $p$, so $p$ can be extracted by taking a greatest
common divisor between $f(x_0)$ and~$N$.
We take $f(x)=\hat{p}+x$ which has the small root $\check{p}$ modulo~$p$
since $f(\check{p})=\hat{p}+\check{p}\equiv0\pmod{p}$.

Now consider the polynomials
$f(x)$, $xf(x)$, $x^2f(x)$, and the constant polynomial $N$ (note that
indeed $\check{p}$ is a root of each of these polynomials modulo~$p$).
Lattice basis reduction will be applied to the lattice basis
generated by $\{N, f(x), xf(x), x^2f(x)\}$
where a polynomial $a_0+a_1x+a_2x^2+a_3x^3$
is represented by the lattice vector $(a_0,10a_1,100a_2,1000a_3)$ or in general
$(a_0,Xa_1,X^2a_2,X^3a_3)$ where $X$ is a bound on the size of $\check{p}$.
Once a short vector of the lattice is uncovered, integer root detection can reveal
the small root $x_0=\check{p}<10$ from which $p=\hat{p}+x_0$ is uncovered.
It is possible that the polynomial associated with the short vector has other
integer roots, and in that case to reveal~$p$
one should check for \emph{each} root~$x_0$ if $\hat{p}+x_0$ divides $N$.
See Figure~\ref{fig:cs_working} for a diagrammatic working of Coppersmith's method.

\begin{figure}
    \centering
    \includegraphics[width=\linewidth]{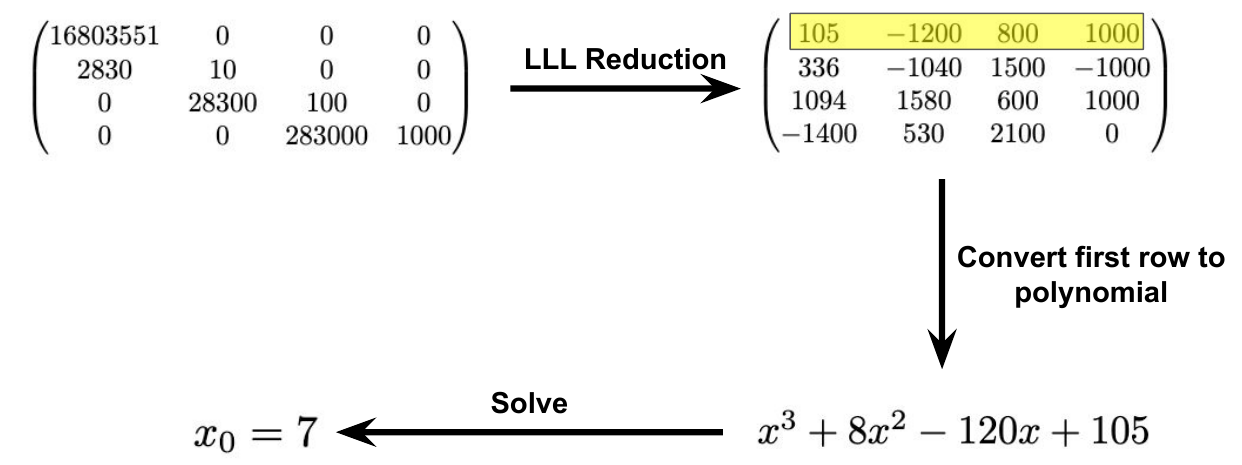}
    \caption{\boldmath Example to demonstrate the working of Coppersmith's method in the case where $N = 16803551$ and $f(x) = 2830 + x$.
    After applying lattice basis reduction, the short polynomial $x^3+8x^2-120x+105$ is discovered which has the integer root $x_0=7$.
    Finally, $f(x_0)=2837$ is a factor of $N$.}%
    \label{fig:cs_working}%
\end{figure}

\subsubsection{Factoring with LSBs}

Coppersmith's method also works if the lower half of the bits of one of the primes is known.
If $\check{p}$ is the integer corresponding to the $m$ least significant bits of $p$, then we want to find a small
root $\hat{p}$~(mod~$p$) of 
\( 2^m \cdot x + \check{p} \).
In contrast with the previous example, the factor $2^m$ has been ``removed''
from the high bits in order to reduce the size of $\hat{p}$.
Supposing that $N$ has $2k$ bits and $p$ has $k$ bits, then
$\hat{p}$ will be an integer at most $2^{k-m}$.

Coppersmith's method requires the polynomial $f$ to be monic (have a leading coefficient of 1), so to enforce this we multiply
the polynomial by $2^{-m} \bmod N$ (which exists as $N$ can be assumed to be odd).  Thus, we set
\begin{equation} f(x) = x + (2^{-m}\check{p} \bmod N) \label{eq:def_f} \end{equation}
and can apply Coppersmith's method on this $f$ to find the small root $\hat{p}$ of $f(x)$ modulo~${p}$. 
Similar to above, Coppersmith uses lattice reduction to find a polynomial with $\hat{p}$ as an \emph{integer} root;
if this polynomial has multiple integer roots, then for each root $x_0$
one should check if $\gcd(f(x_0), N)$ reveals $p$
(or more simply, if $2^m x_0+\check{p}$ divides~$N$).

\section{Previous work}\label{sec:prevwork}

The integer factorization problem has long been proposed as
a way of generating hard SAT instances;
see for example \citeauthor{Cook1997}~\cite{Cook1997}.
As noted by \citeauthor{Hamadi2013}~\cite{Hamadi2013},
the factoring problem gives particularly intriguing
instances for SAT, as the factoring problem is not expected
to be NP-hard (the decision version of the problem being
in both NP and co-NP~\cite{Pratt1975}) and is therefore a candidate for
an ``NP-intermediate'' problem lying between P and NP-hard---a
class about which little is known.

Even though factoring is unlikely
to be NP-hard, the SAT instances produced---at least
using straightforward multiplication circuits---seem
difficult, a fact confirmed by a number of independent
computational experiments~\cite{Paper1,Eriksson2014,lunden2015factoring,Asketorp2014}.
In 2022, \citeauthor{Mosca2022}~\cite{Mosca2022} reported on the state-of-the-art
for integer factorization via SAT solving
and concluded that even a quantum SAT solver would likely be slower than the best
classical algebraic methods.

Heninger and Shacham~\cite{HS09} investigate reconstructing RSA private keys with small public exponent 
from partial knowledge of the random bits of the private key.
They give a ``branch and prune'' algorithm that with high probability efficiently breaks an RSA key
given a random 27\% of bits of its private key.  Here the private key consists of both prime
factors $p$ and~$q$, the decryption exponent~$d$, as well as the two integers $d\bmod(p-1)$ and $d\bmod(q-1)$.
The analysis is heuristic, but they provide experimental evidence that their approach is effective in
practice.  Their approach makes significant use of the bits of $d$, $d\bmod(p-1)$, and $d\bmod(q-1)$
in order to limit the amount of branching.
If this extra information is not available, the approach still succeeds with
high probability if 57\% of the bits of $p$ and~$q$ are known,
or if 42\% of the bits of $p$, $q$, and~$d$ are known. 

In 2013, \citeauthor{Patsakis2013RSAPK}~\cite{Patsakis2013RSAPK} utilized SAT solvers and the encoder
ToughSat~\cite{ToughSAT} to reconstruct RSA private keys
with some partial key exposure and having a fixed public exponent of $e=3$.
He assumes the exposure was either on the bits of $p$ and~$q$ alone,
or on the bits of $p$, $q$, and~$d$.
With this information, he created SAT instances that when solved would determine the factors $p$ and~$q$.

\section{SAT + Coppersmith Approach}\label{sec:enc}

In this section we describe our hybrid SAT + computer algebra system (CAS) approach, first beginning with a basic SAT encoding
in Section~\ref{sec:sat_enc}, a description of the programmatic interface with Coppersmith's method in Section~\ref{sec:prog_enc},
and finally in Section~\ref{sec:d_enc} we describe an encoding for factoring low exponent RSA moduli
that can exploit leaked bits of the decryption exponent~$d$.

\subsection{SAT encoding}\label{sec:sat_enc}

Converting an instance of the factorization problem to a SAT instance is straightforward, as
multiplication circuits can be converted to SAT formulae by operating directly on the bit-representation of the integers.
For example, say we are forming the instance of encoding $N = p \cdot q$
where $p$ and~$q$ are known to be two integers of bitlength $k$.
We represent $p$ and~$q$ as bitvectors $[p_0,\dotsc,p_{k-1}]$ and $[q_0,\dotsc,q_{k-1}]$
and generate a multiplier circuit $\MULT$
computing the bits of the 
the product of $p$ and~$q$ from $p_0$, $\dotsc$, $p_{k-1}$ and~$q_0$, $\dotsc$, $q_{k-1}$.
The $\MULT$ circuit is constructed from chaining together full and half adder circuits,
and then the entire circuit is
converted into CNF by using the Tseytin transformation~\cite{Tseitin1968}
which introduces new Boolean variables representing the output of each gate in the $\MULT$ circuit.
For example, suppose $x$ and $y$ are the inputs to a half adder.
The Tseytin transformation
introduces a new variable~$s$ (denoting the $\mathbb{F}_2$-sum of $x$ and $y$) via $s\liff(x\oplus y)$,
and a new variable~$c$ (denoting the carry of $x$ and $y$) via $c\liff(x\land y)$.
The output bits of the $\MULT$ circuit are set to match the $2k$ bits of $N$ using $2k$ unit clauses
(or in some cases $N$ has $2k-1$ bits).
Similarly, any known bits of $p$ and $q$ are also added to the SAT instance as unit clauses (clauses of length 1).
The solver uses these unit clauses to simplify the SAT instance and improve the efficiency of the solving process.

Some simple optimizations are also encoded.  For example, $p$ and~$q$ must be odd or the problem is trivial,
so we fix the low bits $p_0$ and $q_0$ to true with unit clauses.
Similarly, since both $p$ and $q$ are assumed to be of bitlength $k$,
we fix also both high bits $p_{k-1}$ and $q_{k-1}$ to true.
Our instances were generated using the encoder of \citeauthor{CNFGen}~\cite{CNFGen}, which represents $p$ and~$q$ using
$2k-1$ and $k$ variables respectively.
However, we assign the high $k-1$ bits of~$p$ to false since we only encoded factorization problems
with $p$ and $q$ of equal bitlength.

\subsection{Coppersmith's method in programmatic SAT}\label{sec:prog_enc}

\begin{figure*}
  \centering
  \includegraphics[width=0.66\linewidth]{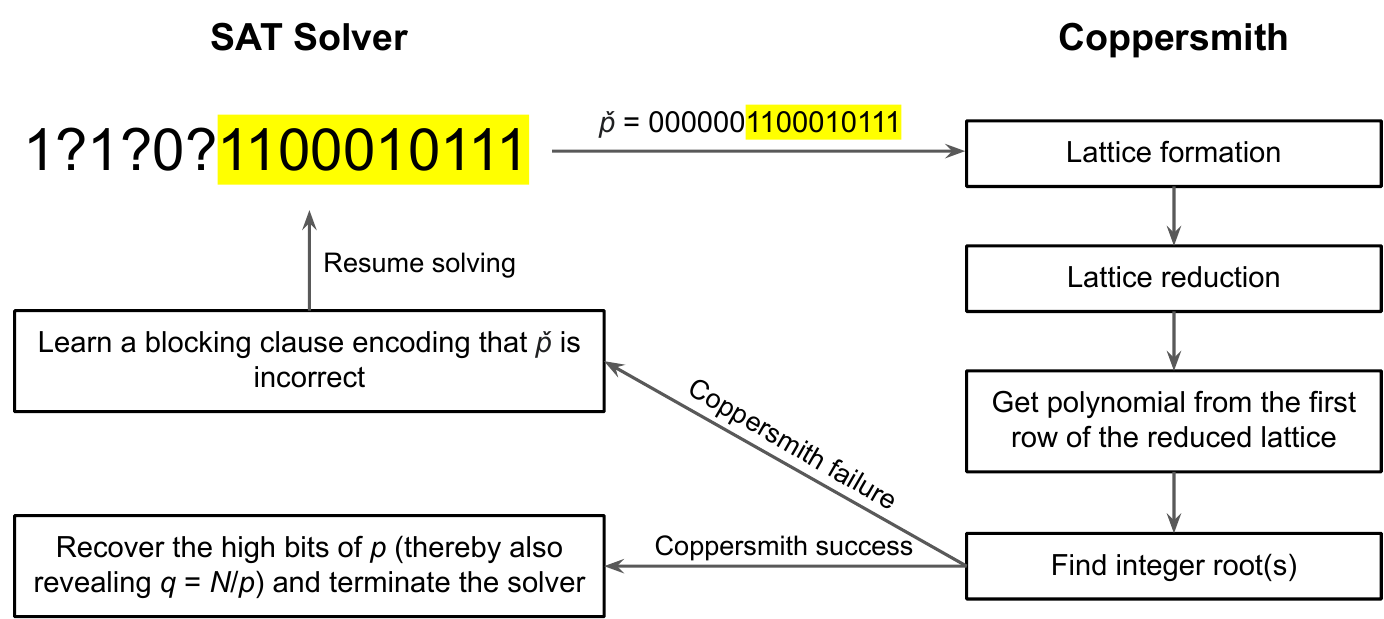}
  \caption{\boldmath A diagram outlining our SAT+CAS method for the factorization problem. By default, Coppersmith's method is invoked
  whenever the lowest $\mathord\approx60\%$ of the bits of $p$ are assigned.
  If the low bits of~$p$ were set correctly, then Coppersmith's method reveals the high bits of $p$ and the solver
  terminates.  If the low bits were set incorrectly, then Coppersmith's method fails and proves that this configuration of
  low bits cannot be extended to a solution.  In this case, a ``blocking clause'' is learned telling
  the solver to backtrack and try a new bit assignment.}
  \label{fig:working}
\end{figure*}

A ``programmatic'' SAT solver calls a custom piece of code
whenever the solver has a partial assignment that cannot
be simplified any further by unit propagation~\cite{Ganesh2012}.
In our case, the intuition behind calling Coppersmith's method is that we can use it to
test when a partial assignment can be extended to a complete assignment without requiring
the SAT solver to actually search for the extension itself.
In our experiments, the most effective strategy was to
call Coppersmith's method from within the SAT solver whenever the solver's current partial assignment
has assigned values to enough of the \textbf{\emph{low}} bits of one of the prime factors
(see Section~\ref{sec:comparison} for why using the low bits is advantageous).
Coppersmith's method can be used when the lowest $50\%+\epsilon$ bits of~$p$ are known, but
as~$\epsilon$ decreases the required lattice dimension increases, and this slows down
lattice reduction.  In practice,
we use Coppersmith's method when at least $60\%$ of the lowest bits of~$p$ are known in order to limit
the overhead from the lattice reduction.
This compromise worked well for the size of $N$ that we used in our experiments, though
it is likely for all $\epsilon>0$ there would be some
sufficiently large $N$ for which it would be worth it to call Coppersmith's method
using $50\%+\epsilon$ bits of $p$.

Following \cite{Bos_Stam_2021,DeMicheli2024}, 
a lattice of dimension 5 is sufficient to recover the unknown bits
when more than $\mathord\approx60\%$ of the lowest bits are known of one of the factors.
The polynomials used to form the lattice are $N^2$, $Nf(x)$, $f(x)^2$, $xf(x)^2$, and $x^2f(x)^2$,
where $f$ is defined as in~\eqref{eq:def_f} and taking $X\coloneqq N^{1/5}\hspace{-1pt}/4$.
The number of unknown bits cannot exceed $\log_2 X$, so if $p$ and $q$ have $k$ bits then one needs $m$,
the number of \emph{known} bits, to be larger than $k-\log_2 X\approx k-2k/5=3k/5$
or about $60\%$ of the bitlength of~$p$.
Note that if $\hat{p}$ denotes the $\lfloor\log_2 X\rfloor$ high bits of~$p$,
then $\hat{p}<X$ and 
by construction of $f$ we have $f(\hat{p})\equiv 0 \pmod p$.
Once the lattice is formed, we perform LLL lattice reduction and finally find the integer roots (if any)
of the polynomial $f_\text{red}$ associated to the first row of the reduced basis.
If the $m$ low bits of~$p$ used to construct $f$ in~\eqref{eq:def_f} were correct, then
the integer roots of $f_\text{red}$
include the mod-$p$ roots of $f$ of absolute value at most~$X$ (see \cite[ch.~19]{Galbraith}
for details).  In other words, $\hat{p}$ is among the integer roots of $f_\text{red}$
if $\check{p}$ was set correctly in $f$.

For each small integer root $x_0$ of $f_\text{red}$ returned by Coppersmith's method, a validation step is executed.
If $\gcd(f(x_0), N)$ is nontrivial, then
the procedure concludes successfully with a factorization of $N$.
However, in cases where no roots provide a factor of~$N$,
a ``blocking clause'' is added to the
SAT solver's learned clause database.
The blocking clause encodes that the combination of the low bits passed to Coppersmith's method was erroneous
by stating that at least one of the bits must change from its current assigned value.
For example, suppose
Coppersmith's method is applied to an 8-bit prime with the assignment $p = \text{\texttt{{???}10011}}$ and fails.
Then the conflict clause will be $\lnot p_4\lor p_3\lor p_2\lor\lnot p_1\lor\lnot p_0$
where the bits of $p$ (from low to high) are represented by the variables $p_0$, $\dotsc$, $p_7$.
The solver incorporates this knowledge as a learnt clause and
immediately backtracks to explore alternative bit combinations.
Figure~\ref{fig:working} visually depicts how the technique works.

\subsection{Low public exponent RSA encoding}\label{sec:d_enc}

We also considered a special case of the factorization problem,
namely, the problem of factoring an RSA modulus $N$ with a public exponent of $e=3$
(implying that both $p-1$ and $q-1$ are not divisible by 3).
In such a case it is possible to derive~\cite{boneh1999twenty}
the equation
\begin{equation} 3d + 2(p + q) = 2N + 3 \label{eq:enc_d} \end{equation}
where $d$ is the decryption exponent.  Moreover, we can approximate
$d$ by $\tilde{d} = \lfloor (2N+3)/3 \rfloor$ because $2(p+q)$
is relatively small compared to $N$.
Indeed, if $p\geq q$ and both factors have $k$ bits then
$q\leq\sqrt{N}$ and $p<2\sqrt{N}$, so $p+q<3\sqrt{N}$.
As pointed out by \citeauthor{Boneh1998}~\cite{Boneh1998},
one can derive
\begin{equation} 0 \leq \tilde{d}-d < 3\sqrt{N}, \label{eq:bounds} \end{equation}
and they remark
\begin{center}
\emph{``It follows that\/ $\tilde{d}$ matches $d$ on the
$n/2$ most significant bits of $d$.''}
\end{center}
Similarly, \citeauthor{HS09}~\cite{HS09} remark that
$\tilde{d}$ \emph{``agrees with $d$ on their $\lfloor n/2\rfloor-2$ most
significant bits''}.\footnote{In both of these quotes~$n$ denotes the bitlength of $N$.}
Surprisingly, both claims are false as
adding even a small difference $\tilde{d}-d<3\sqrt{N}$
to $d$ can in some cases cause a cascade of carries
changing bits well into in the upper-half of $d$.
%
For example, when $N=827\cdot953$, one has
$d=2^{19}-53$ and $\tilde{d}=2^{19}+1133$
which share \emph{no} high bits
(as bitstrings of length 20). 
We noticed this oversight when we attempted to set the high bits
of $d$ to match the high bits of~$\tilde{d}$ (computed
from $\lfloor 2N/3+1 \rfloor$) and in some cases the resulting
instances were shown to be unsatisfiable by the SAT solver.
We resolved this by using the following lemma, which
also gives a slightly stronger version of~\eqref{eq:bounds}, replacing the constant $3$ with $\sqrt2$.
\begin{lemma}\label{lem:msb_d}
Let\/ $N=pq$ be an\/ $n$-bit RSA modulus where\/ $p$ and\/~$q$ have the same bitlength,
suppose\/ $d$ is the decryption exponent for encryption exponent\/ $e=3$,
and set\/ $\tilde{d}=\lfloor 2N/3+1 \rfloor$.  Then
\begin{enumerate}
\item[(a)] $0 \leq \tilde{d}-d < \sqrt{2N}$.
\item[(b)] Write\/ $\tilde{d}$ and\/ $\tilde{d}-\lfloor \sqrt{2N}\rfloor$ as bitstrings
of length\/ $n$, and suppose the upper\/ $l$ bits of
the bitstrings match.
Then the upper\/ $l$ bits of\/ $d$'s bitstring of length\/~$n$ match those of\/ $\tilde{d}$.
\end{enumerate}
\end{lemma}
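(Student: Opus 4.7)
The plan is to prove (a) directly from the identity \eqref{eq:enc_d} and then to derive (b) from (a) by a short monotonicity argument on the ``upper $l$ bits'' map.

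For (a), I first rearrange \eqref{eq:enc_d} into $d = (2N+3)/3 - 2(p+q)/3$. Setting $r \coloneqq 2N \bmod 3 \in \{0,1,2\}$, one has $\tilde d = \lfloor 2N/3+1 \rfloor = (2N+3-r)/3$, and subtraction yields
\[ \tilde d - d \;=\; \frac{2(p+q) - r}{3}. \]
The lower bound $\tilde d - d \geq 0$ is then immediate since $p,q \geq 2$ gives $2(p+q) \geq 8 > 2 \geq r$. For the upper bound it suffices to prove the slightly stronger inequality $2(p+q)/3 < \sqrt{2N}$. Squaring and substituting $N = pq$ converts this into $2(p+q)^2 < 9pq$, which factors as $(2p - q)(p - 2q) < 0$. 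Here is the only place the equal-bitlength hypothesis enters: if $p$ and $q$ both have $k$ bits then $2^{k-1} \leq p, q < 2^k$, so $p < 2q$ and $q < 2p$ hold simultaneously and strictly, making the factored inequality hold.

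For (b), part (a) combined with the integrality of $\tilde d - d$ gives $0 \leq \tilde d - d \leq \lfloor \sqrt{2N} \rfloor$, i.e., $\tilde d - \lfloor \sqrt{2N} \rfloor \leq d \leq \tilde d$. The upper $l$ bits of any $m \in [0, 2^n)$ are exactly $\lfloor m / 2^{n-l} \rfloor$, which is a non-decreasing function of $m$. Since by hypothesis the two endpoints of the interval containing $d$ agree on their upper $l$ bits, every integer between them---$d$ in particular---must agree as well.

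The only real obstacle is pinning down the constant $\sqrt{2}$ in part (a). The naive bounds $q \leq \sqrt{N}$ and $p < 2\sqrt{N}$ mentioned just above the lemma only give $\tilde d - d < 2\sqrt{N}$, which is too weak to rule out the cascading-carry phenomena exemplified by $N = 827 \cdot 953$. Obtaining the sharper constant genuinely requires the symmetric ratio bound $1/2 < p/q < 2$ supplied by the equal-bitlength hypothesis; once that is in hand, the quadratic $(2p-q)(p-2q) < 0$ does all the remaining work, and the bit-level reasoning in (b) is then essentially mechanical.
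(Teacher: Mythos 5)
Your proof is correct and follows essentially the same route as the paper's: bound $\tilde d-d$ by $2(p+q)/3$ via \eqref{eq:enc_d}, show $2(p+q)<3\sqrt{2N}$ from the equal-bitlength constraint $q\leq p<2q$, and deduce (b) by trapping $d$ in $\bigl[\tilde d-\lfloor\sqrt{2N}\rfloor,\tilde d\bigr]$ and using monotonicity of the upper-$l$-bits map. The only divergence is in one step---the paper proves $p+q<3\sqrt{2N}/2$ by observing that $N/q+q$ is decreasing on $\bigl(\sqrt{N/2},\sqrt{N}\bigr]$, whereas you square and factor the target inequality as $(2p-q)(p-2q)<0$---and your handling of the floor via $r=2N\bmod 3$ is slightly more explicit than the paper's, but these are interchangeable details.
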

\begin{proof}
Without loss of generality suppose $q\leq p < 2q$, so that $pq<2q^2$
(i.e., $q>\sqrt{N/2}$) and $q^2\leq pq$ (i.e., $q\leq \sqrt{N}$).
Then $p+q=N/q+q$ and $F(q)\coloneqq N/q+q$ is monotonically decreasing
over $q\in\bigl(\sqrt{N/2},\sqrt{N}\bigr]$, so $p+q<F\bigl(\sqrt{N/2}\bigr)=3\sqrt{2N}/2$.
Using~\eqref{eq:enc_d}
we have
\begin{equation*} 0 \leq \tilde{d}-d \leq 2(p+q)/3 < 2F\bigl(\sqrt{N/2}\bigr)/3 = \sqrt{2N} \label{eq:imp_bounds} \end{equation*}
which is the inequality in (a).

The inequality in (a) is equivalent to
$d\in\bigl(\tilde d-\sqrt{2N},\tilde d\bigr]$.
By assumption, the bitstrings of the lowest and highest integers in this range
have~$n$ bits and share the same $l$ high bits.
The only way this can happen is if \emph{all} bitstrings of integers in
this range all share the same $l$ high bits, including~$d$.
Otherwise, if we want the high bit (i.e., the bit
of index $n-1$)
to match in the lowest and highest integers but \emph{not}
with some integer in the range we would need the range to contain
at least $2^n$ integers which it does not.
\end{proof}

Equation~\eqref{eq:enc_d} can be encoded in SAT using a binary adder on the terms
of the left-hand side, reusing the variables for the bits of $p$ and~$q$
and introducing new variables for the bits of~$d$.  The output bits
of the binary adder are then set to the binary representation of $2N+3$.
The upper bits of $d$
are fixed to those of $\tilde{d}$ using unit clauses
(with the number of bits fixed determined by Lemma~\ref{lem:msb_d}).
Any bits of~$d$ that are leaked can also be added as unit clauses.

\section{Results}\label{sec:results}

\begin{figure*}
  \centering
  \begin{subfigure}{0.49\textwidth}
      \centering
      \includegraphics[width=\linewidth]{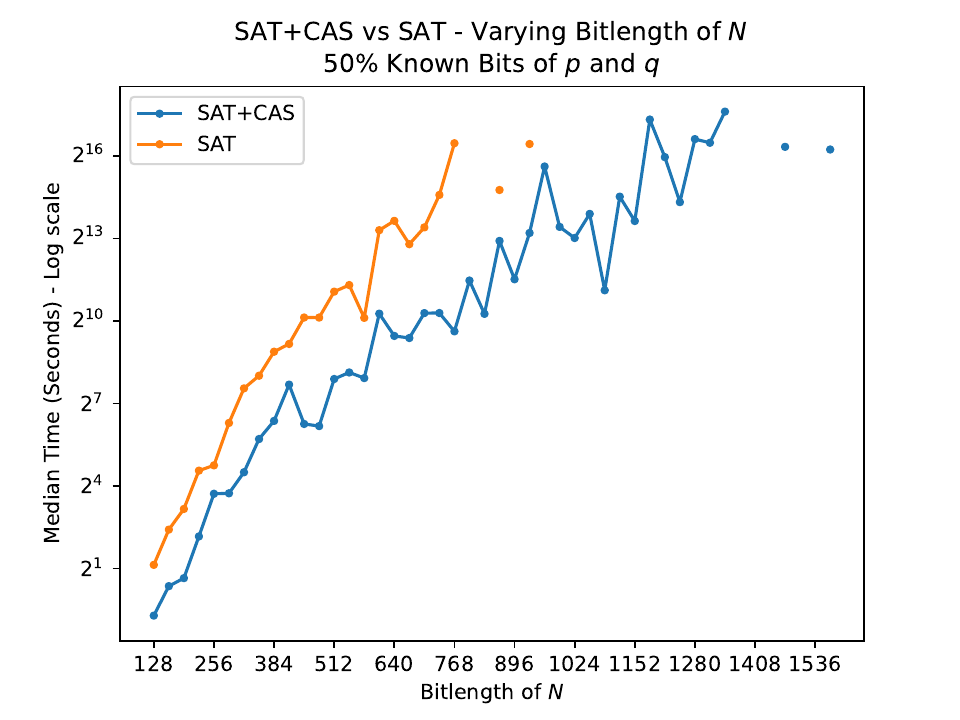}
      \caption{}
      \label{fig:varyingN}
    \end{subfigure}
    \begin{subfigure}{0.49\textwidth}
      \centering
      \includegraphics[width=\linewidth]{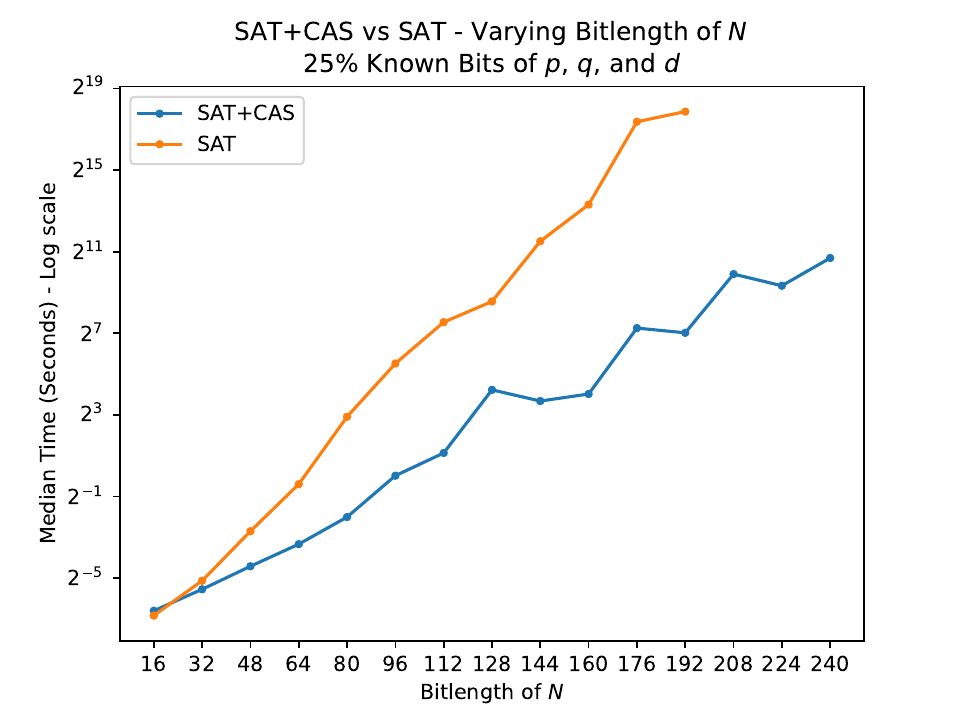}
      \caption{}
      \label{fig:varyingN_withd}
    \end{subfigure}
    \begin{subfigure}{0.49\textwidth}
      \centering
      \includegraphics[width=\linewidth]{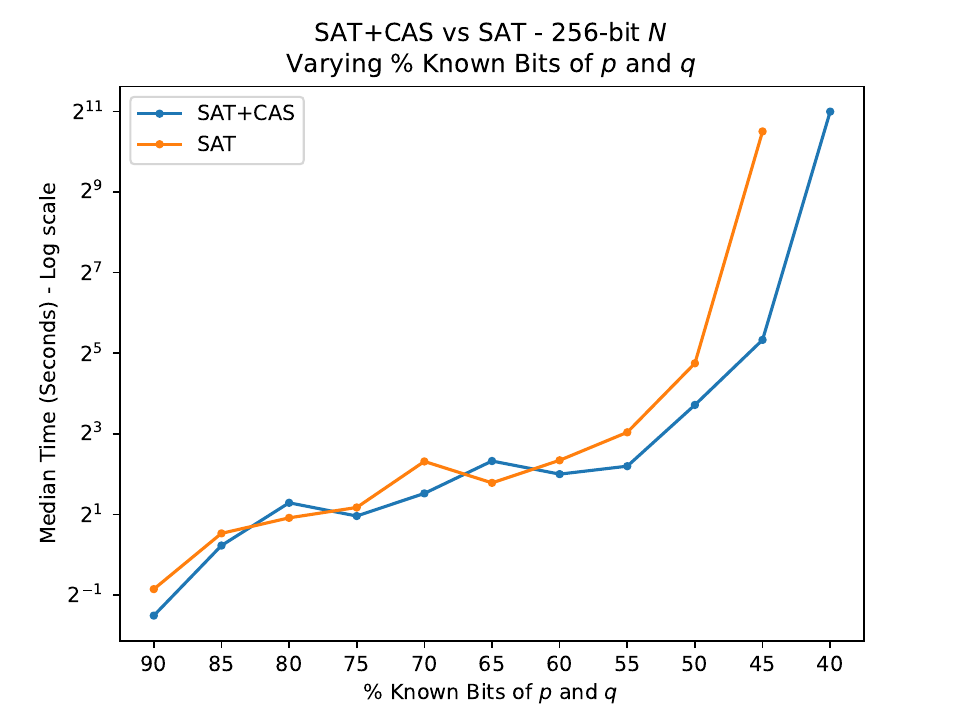}
      \caption{}
      \label{fig:varying_percent}
    \end{subfigure}
    \begin{subfigure}{0.49\textwidth}
      \centering
      \includegraphics[width=\linewidth]{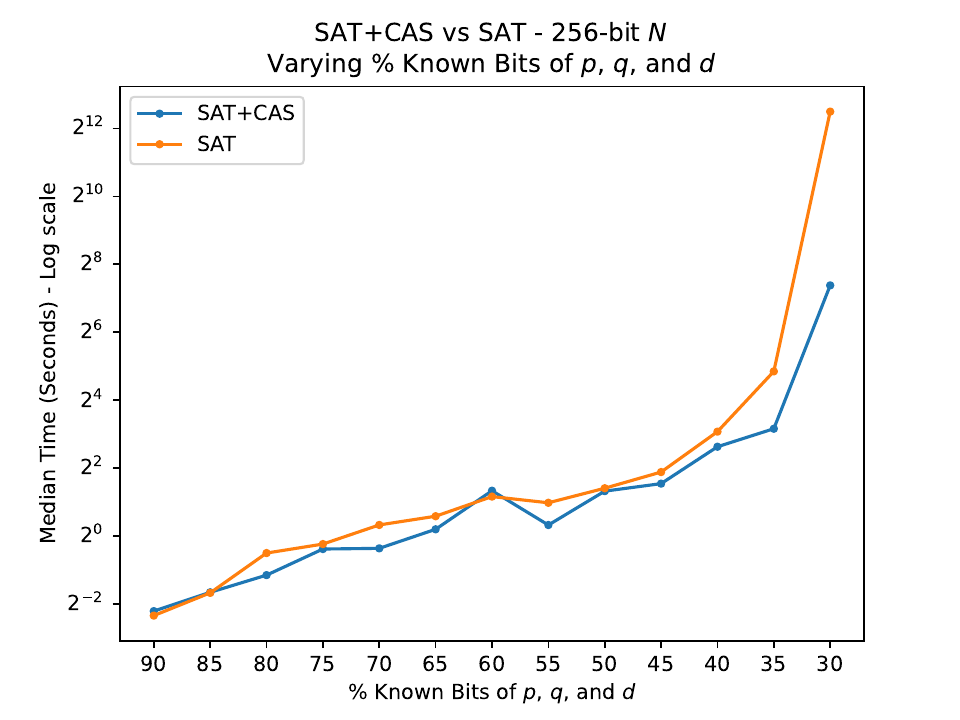}
      \caption{}
      \label{fig:varying_percent_withd}
    \end{subfigure}
%
%
%
    \caption{\boldmath The upper plots compare the median running time across different sizes of~$N$.
    The left plots summarizes instances with random bits of $p$ and $q$ leaked, while the right plots also leak bits of the decryption exponent~$d$.
    The lower plots compare the median running time for a 256-bit $N$ across a varying percentage of known bits.
    All instances were run with a timeout of 3 days, so the lack of a point on the graph indicates the median time was over 3 days.
    %
    %
    All plots are given on a logarithmic scale.}\label{fig:results}
\end{figure*}

To create our SAT instances, we employ the \emph{CNF Generator for Factoring Problems} by \citeauthor{CNFGen}~\cite{CNFGen} 
using the ``$N$-bit'' adder type and the ``Karatsuba'' multiplier type, as we found those to be the most effective.
These instances undergo subsequent enhancements through the integration of supplementary clauses, as detailed in Section~\ref{sec:sat_enc}.
Our SAT solver that calls Coppersmith and the scripts we used to perform our experiments
are available on a public GitHub repository at \url{https://github.com/yameenajani/SAT-Factoring}. 

\subsection{Solving method}

The instances were solved using a programmatic version of MapleSAT~\cite{DBLP:conf/sat/LiangGPC16}
available as a part of the MathCheck project~\cite{bright2016mathcheck2}. 
The version of Coppersmith's algorithm used is a custom implementation in C++.
The GMP 
library~\cite{GMP} was used to form the lattice and it was reduced using the fplll library~\cite{fplll}.
The formation of the polynomial from the reduced basis and its factorization is done using FLINT~\cite{flint}.
All experimentation took place on Compute Canada's Cedar cluster
with each instance solved on a single Intel E5-2683 Broadwell CPU core running at 2.1~GHz and allocated 4~GiB of memory.

Each experimental iteration commences with the generation of an appropriately sized modulus~$N$ using a SageMath~\cite{sagemath} script.
The modulus is then passed as input to the CNF Generator, which in turn generates the requisite CNF and delivers it in the DIMACS SAT file format.
To this file, we append the unit clauses that specify known bits (selected uniformly at random) of $p$ and~$q$.
The percentage of known bits is fixed and given as an argument to the script.
There is also an option to encode the high bits of~$d$
assuming $N$ is a low public exponent modulus (i.e., its prime factors
are not congruent to 1 mod~3)
using the encoding described in Section~\ref{sec:d_enc}.

\subsection{Summary of results}\label{sec:results_summary}

We tested our method on random semiprime factorization problems
where both $p$ and $q$ are not congruent to 1 mod 3,
both with and without the ``low public exponent'' encoding
described in Section~\ref{sec:d_enc}.
In each problem, we leaked a selection of the bits
of $p$ and~$q$ chosen uniformly at random.  For the low public exponent RSA problems, we
leaked a selection of the bits of $d$ chosen uniformly
at random in addition to the high bits of $d$ that could be analytically derived
using Lemma~\ref{lem:msb_d}.
We generated 15 random keys for varying bitsizes of~$N$
ranging from 16 bits to 1728 bits,
and for each key we randomly leaked a percentage of the bits
of the private keys ranging from 90\% to 25\% (in increments of 5\%).
We ran the solver on the SAT factorization problem produced from each key
and plotted the resulting median times across
several different types of problems in Figure~\ref{fig:results}.

The first set of experiments fixes the number of known bits of $p$ and~$q$,
and increases the bitlength of $N$ until the instances become too hard to solve.
As indicated in Figure~\ref{fig:varyingN}, a 768-bit $N$ with 50\% leaked bits
takes a pure SAT approach a median of 90,521 seconds to factor,
while the SAT+CAS approach factors it in a median of 789 seconds.
In these instances each call to Coppersmith
took about 0.005 seconds and Coppersmith was called a median of 490 times
and used a median of 0.5\% of the total running time.
The instances containing leaked bits of~$d$ were significantly easier to solve,
so we repeated the same experiments
but only leaked 25\% of the bits of $p$, $q$, and~$d$.
The results are indicated in Figure~\ref{fig:varyingN_withd}.
For example, a 192-bit $N$ with 25\% leaked bits
takes a pure SAT approach a median of 239,992 seconds to factor~$N$,
while the SAT+CAS approach factors~$N$ in a median of 130 seconds.
In these instances each call to Coppersmith took about 0.002 seconds
and Coppersmith was called a median of 6466 times and used a median
of 13.5\% of the total running time.

The results shown in Figure~\ref{fig:varying_percent} fix the size of $N$ to 256 bits and vary the
percentage of known bits of $p$ and~$q$.
When a large number of bits are known (at least 50\%)
both the SAT and SAT+CAS approaches perform relatively well.  In fact,
when the percentage of known bits is higher than 60\%, the simpler
pure SAT approach can even outperform the more involved SAT+CAS approach.
However, the SAT+CAS approach clearly scales better.
For example, with 45\% leaked bits, the pure SAT solver
factors~$N$ in a median of 1452 seconds,
while the SAT+CAS solver factors $N$ in a median of 40 seconds.
With 40\% leaked bits, the SAT+CAS solver factors $N$ in a median
of 2042 seconds, while the median time of the
pure SAT approach does not complete after 259,200 seconds
(the solver timeout was set to 3 days).
The experiments shown in Figure~\ref{fig:varying_percent_withd} are similar, but
random bits of $d$ are also provided to solver.  In this case,
for all percentages down to 30\% the median instance was
solved within the timeout for both the SAT and SAT+CAS solvers.
However, with 30\% known bits the median SAT time was 5778 seconds,
while the median SAT+CAS time was 167 seconds.
For 256-bit $N$, each call to Coppersmith took about 0.002 seconds.
When 45\% of the bits of $p$ and $q$ were known, Coppersmith was called
a median of 793 times, and when 30\% of the bits of $p$, $q$, and~$d$ were known, Coppersmith
was called a median of 1165 times.

\subsection{Comparison with other approaches}\label{sec:comparison}

Our results show that the SAT+CAS method outperforms
not only a SAT-only approach, but also a brute-force
approach, even if it uses Coppersmith's method.
For example, with 50\% leaked bits of $p$ and $q$, a 512-bit $N$
can be factored by the SAT+CAS solver in a median of 237 seconds,
but a Coppersmith + brute-force approach would need to determine
values for around 64 unknown bits in the lower half of~$p$
before Coppersmith could be applied---much more expensive
given the speed of Coppersmith.
Additionally, the SAT+CAS solver will also be much more efficient
than the number field sieve on the specific problem of factoring
a 512-bit $N$ with 50\% leaked bits, given that
factoring a 512-bit~$N$ with the number field sieve takes around 2770 CPU hours on
Amazon's Elastic Compute Cloud (EC2) service~\cite{Valenta2017}.

We also tried comparing our implementation with the ``branch and prune'' implementation
of \citet{HS09}.  Their approach starts from the low bits of the private key and moves towards
the high bits incrementally, enumerating all possibilities for the lowest $i$ bits by branching on
(and pruning branches when possible) all possibilities for the lowest $i-1$ bits.
Their approach is very effective when the number of branches
does not grow too large, which in practice happens if enough random bits are known of the
private key.  For example, with 45\% randomly leaked bits of $p$ and $q$
for a 256-bit $N$, their implementation required at most 640,000
branches across 15 random trials and in each case~$N$ was factored in under 5 seconds.

When the percentage of known bits dropped too low, their implementation suffered from an
exponential blowup in the number of branches resulting in excessive memory usage.
For example, with 25\% leaked bits of $p$, $q$, and~$d$ and
a 192-bit $N$, across 31 random trials
their implementation required a median of 137 million branches and 46.2~GiB
of memory, taking a median of 491 seconds to factor~$N$ on an Intel i7 CPU running at 2.8~GHz.
A SAT+CAS solver running on the same machine and solving
31 instances with the same proportion of leaked bits
used 
a median of 69 seconds and 87~MiB of memory.
On such instances, the median number of branches for the lowest 60 bits of $p$, $q$, and~$d$
was 213,161 using Heninger and Shacham's code---indicating that the SAT solver, which called
Coppersmith after the lowest 60 bits of~$p$ are set and used a median
of 15,976 Coppersmith calls, reduces 
the number of possibilities for the lowest 60 bits over
a pure ``branch and prune'' approach.

An examination of the low bits of $p$ and $q$ in the partial assignments explored by the SAT solver show that
the solver only explores partial assignments satisfying
Heninger and Shacham's pruning constraints.  In other words,
the solver does not waste time exploring branches that Heninger and Shacham
prune---essentially, the SAT solver incorporates the pruning conditions
without being explicitly told them.  This is likely why calling Coppersmith using the low
bits is much more effective than using the high bits, as the solver
avoids exploring many possibilities for the low bits.  Although there has also been work
done on pruning constraints using the high bits~\cite{Sarkar2013}, these constraints
are more involved and require mathematical context that the solver likely cannot
derive from the SAT encoding alone.  However, in the future
a programmatic SAT solver could potentially incorporate pruning on the high bits.

\section{Conclusion}

In this work we demonstrate the performance of SAT solvers on integer factorization problems
can be dramatically improved by calling a computer algebra system (CAS) during solving
in order to reveal algebraic structure unknown to the solver.  Specifically, our
programmatic SAT+CAS solver calls Coppersmith's method when a significant portion of
the bits of the prime factors have been assigned.  Coppersmith's method is then able to efficiently
(a) uncover the remaining unknown bits; or (b) tell the solver that the current bit assignment is incorrect
and have the solver backtrack immediately.  The latter is the typical case and our results
demonstrate that even with the overhead of querying a CAS the ability to backtrack early causes
the solver to factor integers significantly more efficiently, with a speedup factor that
in practice is exponential in the bitlength of $N$---see Figures~\ref{fig:varyingN} and~\ref{fig:varyingN_withd}.

Although there has been much recent work on adding algebraic reasoning into a SAT solver, to
our knowledge the algebraic information used in our work has previously only been exploited by
computer algebra systems and not SAT solvers.

\paragraph*{Author note}
A preliminary version of this work appeared as an extended abstract
in the \emph{2023 SC-Square workshop}~\cite{DBLP:conf/scsquare/AjaniB23}.  Regrettably, the initial timings
reported in the extended abstract are not trustworthy and should be disregarded.
We regret the error.

\begin{acks}
We thank the anonymous reviewers whose detailed comments improved this paper.
\end{acks}

\bibliographystyle{ACM-Reference-Format}
\bibliography{issac24}

\end{document}